\documentclass[11pt]{article}
\usepackage{arxiv}
\usepackage{amsmath,amssymb,amsfonts,amsthm}
\newtheorem{proposition}{Proposition}
\usepackage{algorithmic}
\usepackage{graphicx}
\usepackage{textcomp}
\usepackage{xcolor}
\usepackage{booktabs}
\usepackage{multirow}
\usepackage{hyperref}
\usepackage{url}
\usepackage{cite}
\usepackage{setspace}

\hypersetup{
    colorlinks=true,
    linkcolor=black,
    citecolor=black,
    urlcolor=blue
}

\begin{document}

\title{\textbf{The Fungible Reserve Standard: A Deterministic Framework for Encoding Carrying Costs in Asset-Backed Tokens}}

\author{
JJ Jia Jing Tan, Eva Meng, Josh Ng, Zack Zhang, \\
September Liu, Teelet Wang, Ludwig Zhang, Seth Yan \\
\textit{Matrixdock} \\
\texttt{\{jj.tan, eva.meng, josh.ng, zack.zhang,} \\
\texttt{september.liu, teelet.wang, ludwig.zhang, seth.yan\}@bit.com}
}

\date{March 2026}

\maketitle

\begin{abstract}
\noindent The tokenization of real-world assets (RWAs) has emerged as a transformative application of blockchain technology, with market projections estimating trillions of dollars in tokenized assets within the coming decade. However, a fundamental challenge remains unaddressed: physical assets such as precious metals, stored commodities, and warehoused goods incur structural negative carry---custody, insurance, and audit costs that accumulate over time. While existing tokenization models have successfully established the market for digital gold and treasuries, they typically manage operational costs at the issuer level. The FRS introduces a framework to bring these economics directly on-chain, avoiding mechanisms such as token rebasing that compromise fungibility and composability with decentralized finance (DeFi) protocols. This paper proposes the Fungible Reserve Standard (FRS), a deterministic token design framework that encodes carrying costs transparently into on-chain logic. The FRS introduces an asset-per-token variable $q(t)$ that decreases according to a predefined annualized carrying cost rate, coupled with a supply reconciliation mechanism that preserves holder balances and ERC-20 composability. While mathematically inspired by the daily expense ratio accrual in traditional asset management---which often embed centralized profit margins---the FRS design specifically encodes actual operational carrying costs to provide pure institutional-grade accounting clarity without compromising DeFi compatibility. The framework is asset-agnostic and applicable to any real-world asset with positive, predictable holding costs.
\end{abstract}
\keywords{Real-World Assets (RWA) \and tokenization \and Decentralized Finance (DeFi) \and carrying costs \and smart contracts \and blockchain economics \and asset-backed tokens}

\section{Introduction}

The tokenization of real-world assets represents one of the most consequential applications of distributed ledger technology, aiming to embed traditionally off-chain financial instruments within programmable blockchain infrastructure~\cite{catalini2020simple, lambert2022security}. Industry projections reflect the scale of this opportunity: Boston Consulting Group estimated the tokenized asset market at \$16.1 trillion by 2030~\cite{bcg2022tokenization}, while McKinsey \& Company projected a tokenized market capitalization of approximately \$2 trillion by 2030 (with an optimistic scenario reaching \$4 trillion), excluding cryptocurrencies and stablecoins~\cite{mckinsey2024ripples}. Major institutional participants have entered this space, including the launch of tokenized money market funds backed by U.S. Treasury securities on public blockchains~\cite{bcg2025tipping}.

While significant progress has been made in tokenizing financial instruments with positive carry characteristics---such as yield-bearing tokens backed by government securities---the tokenization of \textit{physical} assets with \textit{negative carry} presents a distinct and largely unaddressed challenge. Precious metals, stored commodities, warehouse receipts, and other tangible goods require custody, insurance, audits, and operational oversight. These costs create structural negative carry: a persistent economic characteristic intrinsic to the physical asset itself.

In traditional finance, negative carry is embedded transparently into product structures. Exchange-traded commodity vehicles reflect expense ratios through net asset value (NAV) adjustments~\cite{lettau2018etf101}, and custodial costs are deducted daily from fund assets. This mechanism is well-understood by institutional participants and provides accounting clarity.

By contrast, existing real-world asset tokenization models often diverge from this discipline. Some rely on issuer subsidies to offset custody costs, creating opacity regarding long-term sustainability. Others introduce yield overlays or hybrid underlying asset compositions that transform the token into a \textit{structured product} rather than a direct representation of the underlying asset. Still others employ mechanisms such as balance rebasing or wrapper tokens that, while reflecting costs, compromise the token's fungibility and compatibility with decentralized finance (DeFi) protocols.

This paper makes the following contributions:

\begin{enumerate}
    \item We formalize the \textbf{principle of economic purity} for tokenized physical assets, articulating a set of design requirements that ensure the token faithfully represents the economic characteristics of its underlying asset, including negative carry.
    \item We propose the \textbf{Fungible Reserve Standard (FRS)}, a deterministic token design framework that encodes carrying costs through a monotonically decreasing asset-per-token variable $q(t)$, coupled with a supply reconciliation mechanism that preserves ERC-20 fungibility.
    \item We demonstrate how the FRS mechanism is \textbf{inspired by the transparency of institutional accounting}, explicitly contrasting the FRS's focus on pure verifiable physical carrying costs against the centralized, margin-based fee models of traditional asset management.
    \item We provide a \textbf{comparative analysis} showing that the FRS preserves DeFi composability where alternative approaches (rebasing, wrapper tokens, issuer subsidization) do not.
\end{enumerate}

The remainder of this paper is organized as follows. Section~\ref{sec:background} surveys related work in traditional finance carrying-cost mechanisms, token economics, and the RWA tokenization landscape. Section~\ref{sec:philosophy} formalizes the economic purity principle. Section~\ref{sec:frs} presents the Fungible Reserve Standard. Section~\ref{sec:analysis} provides comparative analysis. Section~\ref{sec:compatibility} discusses institutional and DeFi compatibility. Section~\ref{sec:applicability} examines broad applicability across asset classes. Section~\ref{sec:discussion} addresses limitations and future directions. Section~\ref{sec:conclusion} concludes.

\section{Background and Related Work}
\label{sec:background}

\subsection{Traditional Finance: Carrying Cost Mechanisms}

\subsubsection{ETF Expense Ratios vs. Pure Carrying Costs}

Exchange-traded funds (ETFs) are open-end investment funds traded on stock exchanges. The ETF structure involves a distinctive creation/redemption mechanism mediated by Authorized Participants (APs), which maintains alignment between the fund's market price and its net asset value (NAV)~\cite{lettau2018etf101, madhavan2016etf}.

The \textit{expense ratio} is the annual fee charged by an ETF to cover operational costs and the fund provider's profit margin, expressed as a percentage of the fund's total assets. Critically, this fee \textit{accrues daily} and is \textit{deducted from the fund's assets} when daily NAV is calculated, reducing net returns over time without altering the number of shares held by investors~\cite{agapova2011etf}. For physically-backed gold ETFs, this mechanism operates as follows: the trust gradually sells small quantities of gold to cover expenses, causing the gold backing per share to decline monotonically. The holder's share count remains unchanged; only the asset content per share diminishes.

This mathematical decay structure provides a baseline, but the economic driver differs fundamentally. While traditional ETF expense ratios blend actual custody costs with centralized management fees and profit margins, the FRS strictly isolates and encodes the pure, structural \textit{carrying costs} (e.g., vaulting, insurance) of the physical asset. For illustration, an asset-backed ETF with annual expense ratio $r$ and initial asset-per-share $a_0$ sees its physical backing per share after $t$ days decrease as:
\begin{equation}
a(t) = a_0 \cdot \left(1 - \frac{r}{365}\right)^t
\label{eq:etf_decay}
\end{equation}

\subsubsection{Commodity Carrying Costs}

Physical commodities incur carrying costs including storage, insurance, and financing. In futures markets, these costs are reflected in the term structure: when futures prices exceed spot prices (\textit{contango}), the premium approximates the cost of carry~\cite{gorton2006facts, erb2006strategic}. Futures-based commodity ETFs incur additional ``roll yield'' costs when replacing expiring contracts, which can produce significant negative carry~\cite{erb2006strategic}.

\subsubsection{REITs and Pass-Through Economics}

Real Estate Investment Trusts (REITs) demonstrate an alternative mechanism for encoding asset economics into financial instruments. REITs are required to distribute at least 90\% of taxable income to shareholders, effectively passing through operating costs and income~\cite{chan2003reits}. This regulatory structure ensures that the financial instrument reflects the underlying asset's economics rather than absorbing or masking them.

\subsection{Token Economics and Mechanism Design}

\subsubsection{Bitcoin's Deterministic Supply Schedule}

Bitcoin introduced the concept of algorithmically enforced monetary policy~\cite{nakamoto2008bitcoin}. Its fixed supply cap of 21 million coins and predetermined halving schedule (block reward halves every 210,000 blocks) established the precedent that economic parameters can be encoded deterministically in protocol logic, providing transparency and predictability that contrasts with discretionary monetary policy. Carlsten et al.~\cite{carlsten2016instability} analyzed the long-term implications of this design on mining incentives as block rewards diminish.

\subsubsection{Ethereum's EIP-1559 Fee Mechanism}

Ethereum's EIP-1559 upgrade~\cite{roughgarden2021eip1559} reformed the network's fee market by introducing an algorithmically determined base fee that adjusts with network demand and is burned (destroyed), removing it from circulation. This demonstrated that on-chain economic mechanisms can be governed algorithmically, providing formal guarantees on user experience and economic properties. Liu et al.~\cite{liu2022eip1559} empirically validated the mechanism's effects on fee predictability and consensus security.

\subsubsection{Token Valuation and Design}

Cong et al.~\cite{cong2022tokenomics} developed a formal framework for understanding token valuation dynamics, analyzing the interplay between token utility, platform growth, and speculative demand. Their work demonstrates the importance of carefully designed tokenomics in determining the economic sustainability of token-based systems. The nascent field of token mechanism design draws from auction theory, monetary economics, and computer science to create systems with desirable equilibrium properties.

\subsection{RWA Tokenization Landscape}

\subsubsection{Tokenized Physical Assets}

The most prominent tokenized physical assets are gold-backed tokens. Leading products provide robust 1:1 backing, representing one troy ounce of physical gold stored in allocated vaults with regular third-party attestation reports verifying reserve backing~\cite{mdpi2024goldtokens}. The FRS complements these existing standards by providing an alternative for users requiring deterministic, on-chain cost visibility.

\subsubsection{Tokenized Financial Instruments}

Tokenized yield-bearing instruments---such as on-chain representations of U.S. Treasury securities---represent the fastest-growing segment of RWA tokenization~\cite{mckinsey2024ripples}. These products pass through the yield generated by the underlying assets and have attracted billions of dollars in assets under management. They represent \textit{positive carry} assets where the underlying generates returns, in contrast to the \textit{negative carry} physical assets that are the focus of this paper.

\subsubsection{RWA-Specific Blockchain Infrastructure}

A new generation of purpose-built blockchain infrastructure has emerged to address RWA-specific requirements. Layer-2 solutions such as Plume Network provide integrated compliance (KYC/AML), identity management, and DeFi infrastructure purpose-built for tokenized assets~\cite{plume2024docs}. Specialized Layer-1 blockchains such as Polymesh embed identity, compliance, and governance as native protocol features for regulated securities~\cite{polymesh2025whitepaper}. Decentralized networks like Canton Network offer privacy and interoperability for institutional assets, enabling a "network of networks" approach to tokenization~\cite{cantonwhitepaper}. DeFi protocols such as Centrifuge enable the tokenization of illiquid assets as collateral within decentralized lending markets~\cite{centrifuge2024docs}. Financial-services-oriented chains such as Provenance Blockchain embed compliance metadata directly into token structures using the Cosmos SDK~\cite{provenance2024docs}. These platforms address infrastructure-level challenges but do not define how the \textit{economic characteristics} of the underlying asset should be reflected in token design---the problem this paper addresses.

\subsubsection{Token Standards}

The ERC-20 standard~\cite{vogelsteller2015erc20} provides a widely-adopted interface for fungible tokens on Ethereum but lacks built-in compliance features. ERC-1400~\cite{erc1400} extends ERC-20 with partitioned balances and programmable transfer validation for security tokens. ERC-3643 (T-REX)~\cite{erc3643} adds decentralized identity and modular compliance for permissioned tokens. The FRS builds upon ERC-20 to maximize composability and interoperability while encoding asset economics within the standard interface.

\subsection{DeFi Properties for Physical Asset Tokens}

Decentralized finance (DeFi) offers several properties that are uniquely valuable for tokenized physical assets~\cite{werner2022sok}:

\begin{itemize}
    \item \textbf{Transparency}: All state variables, including reserve ratios and fee parameters, are publicly queryable on-chain at any time---in contrast to periodic disclosures in traditional finance.
    \item \textbf{Composability}: Tokens adhering to standard interfaces (ERC-20) can be integrated into decentralized exchanges, lending protocols, and structured products without bespoke integrations.
    \item \textbf{Continuous Operation}: Unlike traditional exchanges with fixed trading hours, blockchain-based tokens are tradable 24/7 across geographies.
    \item \textbf{Traceability and Auditability}: All minting, burning, and transfer events are recorded on an immutable ledger, providing a comprehensive audit trail.
    \item \textbf{Non-Custodial Access}: Users maintain direct control over their tokens without reliance on intermediary custodians for trading and transfer.
\end{itemize}

These properties have been systematized by Werner et al.~\cite{werner2022sok} and Catalini and Gans~\cite{catalini2020simple}, who identify reduced verification costs and enhanced network effects as fundamental advantages of blockchain-based financial systems.

\section{Design Philosophy: The Principle of Economic Purity}
\label{sec:philosophy}

We formalize the design philosophy underlying the Fungible Reserve Standard through the following principle:

\begin{quote}
\textit{A tokenized representation of a real-world asset should accurately reflect the fundamental economic properties of the underlying asset, including its structural carrying costs, without distortion, subsidization, or obfuscation.}
\end{quote}

This principle, which we term \textbf{economic purity}, yields four formal design requirements:

\begin{description}
    \item[\textbf{R1} (Reserve Integrity):] The underlying physical reserve must remain verifiable in its most fundamental form. The asset held in custody must be the asset the token claims to represent, without blending or substitution.
    \item[\textbf{R2} (Deterministic Accounting Identity):] The relationship between token supply and physical reserve must be deterministic and continuously computable:
    \begin{equation}
        S(t) \cdot q(t) = R(t)
        \label{eq:identity}
    \end{equation}
    where $S(t)$ is the total token supply, $q(t)$ is the asset quantity per token, and $R(t)$ is the total physical reserve at time $t$.
    \item[\textbf{R3} (Economic Isolation):] The token must not blend with off-chain downstream strategies---such as yield overlays, lending programs, or structured products---to offset negative carry. The token's economics must represent the \textit{asset}, not a \textit{strategy} built around the asset.
    \item[\textbf{R4} (On-Chain Cost Encoding):] Carrying costs must be encoded on-chain with rule-based, transparent logic, enabling institutional-grade accounting clarity. Carrying cost parameters must be publicly visible and their application algorithmically deterministic, stripping away arbitrary or centralized issuer fees.
\end{description}

These requirements distinguish the FRS from existing approaches. Issuer-subsidized models violate \textbf{R3} and \textbf{R4} by absorbing costs off-chain and concealing the asset's true economics. Rebasing tokens satisfy \textbf{R4} but violate fungibility assumptions expected by DeFi protocols. Wrapper token approaches violate \textbf{R3} by introducing additional layers that obscure the direct asset relationship.

\section{The Fungible Reserve Standard}
\label{sec:frs}

Table~\ref{tab:notation} summarizes the notation used throughout this section.

\begin{table}[htbp]
\caption{Notation}
\label{tab:notation}
\begin{center}
\begin{tabular}{cl}
\toprule
\textbf{Symbol} & \textbf{Description} \\
\midrule
$q(t)$ & Quantity of physical asset per token at time $t$ \\
$S(t)$ & Total token supply at time $t$ \\
$R(t)$ & Total physical reserve in custody at time $t$ \\
$c$ & Annualized carrying cost rate \\
$t_0$ & Inception time \\
$\Delta t$ & Cost application interval (e.g., 1 day) \\
\bottomrule
\end{tabular}
\end{center}
\end{table}

\subsection{Token Design}

The Fungible Reserve Standard defines a token whose value representation is governed by the accounting identity:
\begin{equation}
    q(t) = \frac{R(t)}{S(t)}
    \label{eq:q_def}
\end{equation}

At inception $t_0$, the standard sets $q(t_0) = 1$, meaning each token corresponds to one unit of the underlying asset. Consequently:
\begin{equation}
    S(t_0) = R(t_0)
    \label{eq:inception}
\end{equation}

The value of $q(t)$ decreases monotonically over time to reflect the carrying costs of the underlying physical asset. This design ensures that the token's economic representation evolves consistently with the real-world cost of maintaining the reserve.

\subsection{Deterministic Carry Encoding}

Rather than deducting tokens from holders, rebasing token balances, or reducing physical reserves, the FRS adjusts the asset-per-token variable $q(t)$ according to a predefined annualized carrying cost rate $c$. To optimize for smart contract execution environments where exponential compounding is computationally expensive, the FRS employs a \textbf{linear deduction} model. For an epoch starting at $t_k$ with rate $c_k$, the value after $n$ intervals $\Delta t$ (e.g., days) is:
\begin{equation}
    q(t_k + n\Delta t) = q(t_k) \cdot \left(1 - c_k \cdot \frac{n}{N}\right)
    \label{eq:carry_linear}
\end{equation}
where $N$ is the number of intervals per year (e.g., $N = 365$ for daily application).

This linear formulation provides a predictable reduction in asset representation with the following properties:
\begin{itemize}
    \item The annual carrying cost rate $c$ is defined \textit{ex ante} and publicly visible on-chain.
    \item The linear computation minimizes gas costs and execution complexity.
    \item The value of $q(t)$ can be deterministically calculated for any future timestamp within the same rate epoch.
\end{itemize}

\subsubsection{Handling Cost Rate Changes}
When the carrying cost rate must be updated (e.g., from $c_k$ to $c_{k+1}$), a strict state checkpointing mechanism is required. The protocol must comprehensively reconcile the state up to the boundary of the change---typically 12:00 AM (00:00:00 UTC) of the day immediately preceding the effective date of the new rate. At this checkpoint $t_{k+1}$, all outstanding carry is applied using $c_k$ to lock in the final $q(t_{k+1})$. Only after this reconciliation is the new mathematical epoch initiated, such that all subsequent days strictly use $c_{k+1}$ applied linearly against the newly locked base $q(t_{k+1})$.

\begin{proposition}
The mathematical structure of the linear FRS carry encoding~(\ref{eq:carry_linear}) serves as a rigorous first-order approximation to the daily compound accrual calculation in physically-backed commodity ETFs~(\ref{eq:etf_decay}), bounded tightly for small $c$.
\end{proposition}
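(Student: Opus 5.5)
We compare the two decay factors over a single rate epoch, normalizing $q(t_k)=a_0$ and identifying $r$ with $c$ and $N$ with $365$. After $n$ intervals the factors are $L_n = 1 - cn/N$ for the FRS linear rule~(\ref{eq:carry_linear}) and $E_n = (1-c/N)^n$ for the ETF compounding rule~(\ref{eq:etf_decay}). The plan has three steps: expand $E_n$ with the binomial theorem, identify $L_n$ as its first-order truncation, and bound the remainder explicitly.

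\textbf{Step 1 (first-order agreement).} Set $x = c/N \in [0,1)$. Then
\[
E_n = \sum_{j=0}^{n} \binom{n}{j}(-x)^j = 1 - nx + \binom{n}{2}x^2 - \cdots .
\]
The constant and linear terms are exactly $L_n = 1 - nx$. This gives the first-order claim: $E_n - L_n = O(x^2)$ for fixed $n$, and the derivatives with respect to $c$ agree at $c=0$.

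\textbf{Step 2 (sign and upper bound of the error).} We prove the two-sided estimate
\[
0 \le E_n - L_n \le \binom{n}{2}x^2 \le \frac{(nx)^2}{2} = \frac{1}{2}\left(\frac{cn}{N}\right)^2 .
\]
The lower bound is Bernoulli's inequality $(1-x)^n \ge 1-nx$, which holds for $x\le 1$. The upper bound is the alternating-series truncation $(1-x)^n \le 1 - nx + \binom{n}{2}x^2$, valid for $x\in[0,1]$. We prove it by induction on $n$: multiply the inductive hypothesis by $(1-x)\ge 0$ and check that the leftover cubic term $-\binom{n}{2}x^3$ is nonpositive.

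Alternatively, apply Taylor's theorem with Lagrange remainder to $f(x) = (1-x)^n$ at $0$. The second derivative is $f''(\xi) = n(n-1)(1-\xi)^{n-2} \le n(n-1)$ for $\xi \in [0,x]$, which gives the same bound.

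\textbf{Step 3 (interpretation and relative error).} Over one year ($n=N$) the absolute gap is at most $c^2/2$. For $c = 0.5\%$ this is at most $1.25\times 10^{-5}$ per unit asset. This makes precise the phrase ``bounded tightly for small $c$'': the error is quadratic in $c$ while the deduction itself is linear in $c$.

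For the relative error, divide by $E_n \ge L_n$. Provided $cn/N \le 1/2$, we get
\[
\frac{E_n - L_n}{E_n} \le (cn/N)^2 .
\]
We also note that within an epoch the FRS always deducts slightly \emph{more} than the compounded ETF rule, since $L_n \le E_n$. Across epochs, the checkpointing in the rate-change rule restarts the linear rule from the locked base $q(t_{k+1})$. The per-epoch bounds therefore compose multiplicatively: the ratio of the two products of factors is within $\prod_k\bigl(1 - (c_k n_k/N)^2\bigr)$ of $1$.

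\textbf{Main obstacle.} The main difficulty is fixing the correct notion of ``tight'' rather than any computation. The estimate is tight only when $cn/N$ is small, so the bound has to be stated in terms of the per-epoch accumulated rate $cn/N$ and not $c$ alone. Otherwise, for very long epochs the linear rule can even drive $q$ negative while $E_n$ stays positive.

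I would therefore state the result for $cn/N \le 1$, or $\le 1/2$ for the relative version. I would then point out that periodic checkpointing, which already resets the base at each rate change, keeps the approximation error controlled indefinitely.
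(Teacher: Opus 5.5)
Your proposal is correct and follows the paper's route: you expand $(1-c/N)^n$ binomially and recognize the linear rule $1-cn/N$ as its constant-plus-linear truncation. Where the paper only asserts that the higher-order terms are negligible for $c\le 0.02$, you prove the two-sided bound $0\le (1-c/N)^n-(1-cn/N)\le \tfrac12 (cn/N)^2$ (Bernoulli plus the second-order truncation inequality), which is what gives ``rigorous'' and ``bounded tightly'' real content and shows that the controlling parameter is the per-epoch accumulated rate $cn/N$ rather than $c$ alone, an assumption the paper leaves implicit in its phrase ``over an annual horizon.''
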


\begin{proof}
The cumulative daily decay of a traditional ETF is $g(n) = g_0 \cdot (1 - \frac{r_{\text{ETF}}}{365})^n$. Setting $q(t_0) = g_0 = 1$ and $c_{\text{FRS}} = r_{\text{ETF}}$, the binomial expansion of the ETF decay yields:
$$ \left(1 - \frac{c_{\text{FRS}}}{365}\right)^n = 1 - n \left(\frac{c_{\text{FRS}}}{365}\right) + \frac{n(n-1)}{2!} \left(\frac{c_{\text{FRS}}}{365}\right)^2 - \frac{n(n-1)(n-2)}{3!} \left(\frac{c_{\text{FRS}}}{365}\right)^3 + \dots $$
The FRS linear model computes exactly the first two terms:
$$ q(t_0 + n) = 1 - c_{\text{FRS}}\frac{n}{365} $$
Since typical real-world carrying costs are small ($c \le 0.02$), the higher-order terms are mathematically negligible. The linear formula perfectly optimizes on-chain gas execution while mirroring the pure economic depreciation of physical custody over an annual horizon.
\end{proof}

This structural alignment allows the FRS to implement a familiar accounting mechanism on-chain, while shifting from centralized compounding to gas-optimized linear tracking of physical carrying costs.

\subsection{Supply Reconciliation Mechanism}

After each adjustment of $q(t)$, the token supply must be reconciled to preserve the accounting identity~(\ref{eq:identity}). Since $R(t)$ remains constant during a cost application (no physical reserves are added or removed), and $q(t)$ has decreased, the supply must increase:

\begin{equation}
    S(t + \Delta t) = \frac{R(t)}{q(t + \Delta t)} = S(t) \cdot \frac{q(t)}{q(t + \Delta t)}
    \label{eq:reconcile}
\end{equation}

The additional tokens minted are:
\begin{equation}
    \Delta S = S(t + \Delta t) - S(t) = S(t) \cdot \left(\frac{q(t)}{q(t+\Delta t)} - 1\right)
    \label{eq:delta_s}
\end{equation}

Substituting the expected linear deduction $\delta q = q(t+n\Delta t) - q(t)$, the required supply expansion is easily computed on-chain. These newly minted tokens are assigned to a designated \texttt{CarryCollector} address. This mechanism achieves:

\begin{enumerate}
    \item Physical reserves remain unchanged during reconciliation.
    \item All existing holder token balances remain \textit{numerically intact}, preserving fungibility and DeFi compatibility.
    \item The dilution of asset-per-token value occurs proportionally across all holders.
    \item Cost collection is fully transparent, verifiable, and recorded on-chain.
\end{enumerate}

The explicit separation of physical custody costs via the \texttt{CarryCollector} isolates pure asset carrying costs from external operational margins, unlike bundled ETF fees.

\subsection{Precision and Rounding Error Management}

Because the EVM natively supports integer mathematics rather than floating-point numbers, calculating fractional linear deductions over time inevitably introduces the risk of truncation and rounding errors. If unmanaged, these rounding artifacts can accumulate across daily checkpoints, eventually diverging the on-chain representation $R(t) = S(t) \cdot q(t)$ from the exact off-chain physical reality.

To counteract this, the FRS introduces extensive scalar precision. In optimal implementations, the underlying computations and token fractional mechanics operate using \textbf{9 decimal places of precision} ($10^9$ scaling). 
\begin{itemize}
    \item \textbf{Minimized truncation gap}: Operating at 9 decimal places ensures that daily fraction splits for $c \cdot \frac{n}{365}$ retain extreme accuracy, pushing rounding truncation down to nominal fractions of a billionth of an asset unit.
    \item \textbf{Reconciliation integrity}: When performing supply reconciliation (minting $\Delta S$ to the \texttt{CarryCollector}), high decimal resolution guarantees that the aggregate balances of standard token holders mathematically sum to the exact non-diluted reserve ratio without leaving unassigned "dust" behind.
    \item \textbf{On-chain efficiency}: While 18 decimals is standard for general ERC-20 tokens, 9 decimals offers a specialized equilibrium for physically vaulted assets (such as metric grams of gold or troy ounces), fully enclosing the physically measurable reality of the asset while preserving computational bounds within EVM word sizes.
\end{itemize}

\subsection{Reserves Addition and Reduction}

When physical reserves are added or removed---corresponding to minting or burning of tokens---the prevailing $q(t)$ value determines the exact token quantity:

For reserve additions of quantity $\Delta R_+$:
\begin{equation}
    \Delta S_{\text{mint}} = \frac{\Delta R_+}{q(t)}
    \label{eq:mint}
\end{equation}

For reserve reductions of quantity $\Delta R_-$:
\begin{equation}
    \Delta S_{\text{burn}} = \frac{\Delta R_-}{q(t)}
    \label{eq:burn}
\end{equation}

The continuously evolving $q(t)$ ensures precise representation of any addition or removal at any point in time. After the operation, the accounting identity~(\ref{eq:identity}) is maintained:
\begin{equation}
    (S(t) \pm \Delta S) \cdot q(t) = R(t) \pm \Delta R
    \label{eq:identity_preserved}
\end{equation}

This mechanism preserves the integrity of the asset-token relationship while ensuring that all changes in the underlying reserves are accurately reflected on-chain.

\section{Analysis and Comparison}
\label{sec:analysis}

\subsection{Comparison with Alternative Approaches}

Table~\ref{tab:comparison} provides a systematic comparison of the FRS with alternative approaches to encoding carrying costs in asset-backed tokens.

\begin{table*}[htbp]
\caption{Comparison of Carrying Cost Mechanisms for Asset-Backed Tokens}
\label{tab:comparison}
\begin{center}
\begin{tabular}{lcccc}
\toprule
\textbf{Property} & \textbf{FRS} & \textbf{Issuer Subsidized} & \textbf{Rebasing} & \textbf{Wrapper Tokens} \\
\midrule
On-chain cost transparency & \checkmark & $\times$ & \checkmark & $\times$ \\
Holder balance stability & \checkmark & \checkmark & $\times$ & \checkmark \\
ERC-20 DeFi composability & \checkmark & \checkmark & $\times$ & Partial \\
Economic fidelity to underlying & \checkmark & $\times$ & Partial & $\times$ \\
Institutional accounting clarity & \checkmark & Partial & $\times$ & $\times$ \\
Deterministic cost schedule & \checkmark & $\times$ & \checkmark & Varies \\
Long-term sustainability & \checkmark & $\times$ & \checkmark & \checkmark \\
Reserve integrity & \checkmark & \checkmark & \checkmark & Partial \\
\bottomrule
\end{tabular}
\end{center}
\end{table*}

\subsubsection{Issuer-Subsidized Models}

In issuer-subsidized models, the token issuer absorbs custody and storage costs, either from operational budgets or by retaining yield generated from other activities. While this simplifies the user experience:
\begin{itemize}
    \item The token's price does not reflect the true cost of maintaining the underlying asset, violating \textbf{R3} (Economic Isolation) and \textbf{R4} (On-Chain Cost Encoding).
    \item Long-term sustainability depends on the issuer's financial health and willingness to continue subsidization.
    \item Institutional participants cannot derive accurate cost bases from on-chain data alone.
\end{itemize}

\subsubsection{Rebasing Mechanisms}

Rebasing tokens adjust all holder balances proportionally at regular intervals to reflect value changes. While this does encode costs on-chain:
\begin{itemize}
    \item Fluctuating balances break assumptions made by DeFi protocols. A lending protocol, for example, may record a deposit of 100 tokens; if the balance subsequently decreases to 99.97 through rebasing, accounting discrepancies arise.
    \item Portfolio tracking and tax accounting become significantly more complex.
    \item Integration with automated market makers (AMMs) and order book protocols requires custom handling~\cite{werner2022sok}.
\end{itemize}

\subsubsection{Wrapper Token Approaches}

Wrapper tokens encapsulate the underlying asset token within a secondary contract that introduces fee logic. This approach:
\begin{itemize}
    \item Introduces an additional layer of smart contract risk.
    \item Requires wrapping/unwrapping operations that add friction.
    \item Obscures the direct relationship between the token and the underlying asset, violating \textbf{R3}.
    \item May fragment liquidity between wrapped and unwrapped versions.
\end{itemize}

\subsection{Structural Analogs to TradFi Vehicles}

The FRS mechanism provides a transparent on-chain analog to traditional finance expense structures, while removing centralized fee extraction. Table~\ref{tab:etf_bridge} illustrates the structural mapping.

\begin{table}[htbp]
\caption{Comparative Accounting Logic: Traditional vs. On-chain Models}
\label{tab:etf_bridge}
\begin{center}
\begin{tabular}{ll}
\toprule
\textbf{Traditional Asset Management Logic} & \textbf{FRS Mechanism} \\
\midrule
Asset per share $a(t)$ & Asset per token $q(t)$ \\
Expense ratio $r$ (includes profit margin) & Annual carrying cost rate $c$ (pure custody cost) \\
Daily NAV accrual & Daily $q(t)$ update \\
Fund sells asset for fees & \texttt{reconcileSupply} mints tokens for carry \\
Shares outstanding unchanged & Holder balances unchanged \\
Published NAV & On-chain $q(t)$ query \\
Periodic audit & On-chain verification + reserve audit \\
\bottomrule
\end{tabular}
\end{center}
\end{table}

The key enhancement over the ETF model is the transition from \textit{periodic disclosure of centralized fees} to \textit{continuous on-chain transparency of physical carrying costs}. In a traditional ETF, NAV is published once daily, and detailed audit reports are periodic. In the FRS, $q(t)$, $S(t)$, and the cost rate $c$ are queryable at any time by any party, offering real-time economic transparency.

\section{Institutional and DeFi Compatibility}
\label{sec:compatibility}

\subsection{Accounting Clarity and Transparency}

Institutional adoption of tokenized real-world assets requires accounting symmetry, transparency, and deterministic behavior. The FRS aligns with established financial product structures:

\begin{itemize}
    \item \textbf{Mathematical inspiration from expense ratios}: As demonstrated in Section~\ref{sec:analysis}, the FRS mechanism uses mathematical structures familiar to institutional participants from daily expense ratio accruals, but cleanly unbundles pure carrying costs from management fees.
    \item \textbf{Continuous asset-backed value computation}: The reserve identity $R(t) = S(t) \cdot q(t)$ can be computed on-chain at any time, enabling independent auditors to verify that:
    \begin{enumerate}
        \item Reserves in custody reports match attested amounts on-chain.
        \item On-chain token supply $S(t)$ is publicly viewable.
        \item The asset-per-token variable $q(t)$ follows deterministic logic with publicly visible parameters.
    \end{enumerate}
    \item \textbf{Real-time auditability}: The mechanism provides continuous, mathematically verifiable evidence of custody costs on-chain, eliminating the opacity of centralized fee models.
\end{itemize}

\subsection{DeFi Compatibility}

A central design objective of the FRS is preserving full token composability with decentralized finance infrastructure. Since holder token balances do not fluctuate numerically---unlike rebasing tokens---the FRS token is fully compatible with:

\begin{itemize}
    \item \textbf{Decentralized exchanges (DEXs)}: AMM liquidity pools and order book protocols can integrate the token using standard ERC-20 interfaces.
    \item \textbf{Lending protocols}: Deposit and collateral accounting remains consistent; no reconciliation logic is required for balance changes.
    \item \textbf{Price Oracles}: On-chain price oracles can easily digest the asset-per-token variable $q(t)$ into their reported net price, enabling accurate valuation for DeFi integrations.
    \item \textbf{Structured DeFi vaults}: Yield strategies and composable primitives can incorporate the token without custom adapters.
    \item \textbf{Cross-chain bridges}: Standard bridge contracts designed for ERC-20 tokens require no modification.
\end{itemize}

The FRS thus occupies a unique position: it provides institutional-grade accounting clarity---typically associated with centralized, regulated financial products---while maintaining full composability with permissionless DeFi infrastructure.

\section{Applicability Across Asset Classes}
\label{sec:applicability}

The Fungible Reserve Standard is broadly applicable to real-world assets characterized by \textit{positive, predictable holding costs}---where custody, storage, compliance, or preservation expenses accumulate over time and can be encoded as deterministic value adjustments.

\subsection{Directly Applicable Asset Classes}

\begin{itemize}
    \item \textbf{Vaulted precious metals}: Gold, silver, platinum, and palladium held in allocated custody incur well-defined storage and insurance costs.
    \item \textbf{Warehoused commodities}: Industrial metals, agricultural products, and raw materials stored in regulated warehouses have measurable custody costs.
    \item \textbf{Carbon instruments}: Regulated environmental assets (carbon credits, emissions allowances) incur registry, verification, and audit costs that follow externally defined schedules.
    \item \textbf{Fine art and collectibles}: Insured, climate-controlled storage with periodic appraisal involves quantifiable recurring costs.
\end{itemize}

\subsection{Framework Properties}

In general, any asset for which holding costs satisfy the following conditions is a candidate for FRS encoding:
\begin{enumerate}
    \item Costs are \textit{structurally positive} (the asset incurs costs, not yields, from holding).
    \item Costs are \textit{time-dependent} (they accrue predictably over time).
    \item Costs are \textit{operationally verifiable} (they can be attested by independent parties).
\end{enumerate}

The annualized carrying cost rate $c$ can be set to reflect the specific cost structure of each asset class, making the FRS an \textit{asset-agnostic} mechanism parameterized by a single, transparent variable.

\subsection{Integration with RWA Infrastructure}

The FRS is designed to be deployable on any smart contract platform, including general-purpose chains (Ethereum, Sui, etc.) and RWA-specific infrastructure (Plume, etc.). On RWA-focused chains, the FRS mechanism can integrate with native compliance and identity layers while maintaining its economic logic independently. The separation of \textit{economic design} (FRS) from \textit{infrastructure design} (compliance, identity, governance) reflects a modular architecture principle that allows each layer to evolve independently.

\section{Discussion}
\label{sec:discussion}

\subsection{Limitations}

The FRS addresses the specific problem of encoding \textit{negative carrying costs} for physical assets and does not address several related challenges in RWA tokenization:

\begin{itemize}
    \item \textbf{Oracle design}: The FRS assumes that reserve attestation is performed through external audits. The design of trust-minimized reserve oracles remains an open research problem.
    \item \textbf{Credit and counterparty risk}: The standard does not address risks arising from custodian default or fraud.
    \item \textbf{Dynamic cost rates}: The current formulation assumes a fixed annualized carrying cost rate $c$ that can only be adjusted on a discrete interval. Assets with variable or unpredictable cost structures may require more flexible on-chain updates.
    \item \textbf{Regulatory compliance}: The FRS addresses economic design but does not incorporate jurisdiction-specific compliance logic such as KYC/AML. Integration with standards like ERC-3643~\cite{erc3643} may be necessary for regulated deployments.
\end{itemize}

\subsection{Future Directions}

Several extensions merit further investigation:

\begin{itemize}
    \item \textbf{Dynamic cost rates}: Allowing $c$ to be updated through governance mechanisms while maintaining transparency guarantees and providing advance notice of changes.
    \item \textbf{Continuous Carrying Costs}: Investigating mathematical models for costs that need to be charged on a continuous interval rather than discrete daily epochs.
    \item \textbf{Multi-asset reserves}: Extending the framework to tokens backed by baskets of physical assets with heterogeneous carrying costs.
    \item \textbf{Formal verification}: Applying formal methods to prove correctness properties of the smart contract implementation, including the invariant preservation of the accounting identity~(\ref{eq:identity}).
    \item \textbf{Cross-chain deployment}: Examining how the FRS accounting identity can be maintained consistently across multiple blockchains via bridge protocols.
    \item \textbf{Empirical analysis}: Conducting simulations or deploying pilots to study the FRS's behavior under varying market conditions, carrying cost rates, and DeFi integration scenarios.
\end{itemize}

\subsection{Broader Implications}

The FRS contributes to a broader objective in the maturation of tokenized assets: the transition from \textit{mere on-chain representation} to \textit{economically accurate representation}. By establishing the principle of economic purity and providing a concrete mechanism for its implementation, the FRS offers a template for how other asset-specific economic characteristics might be encoded into token design. By providing a deterministic on-chain mechanism for value adjustment, the FRS offers a transparent framework that aligns with the reporting requirements of institutional participants, regardless of the underlying jurisdiction.

\section{Conclusion}
\label{sec:conclusion}

The maturation of real-world asset tokenization requires a transition from simple on-chain representation to economically accurate representation that faithfully encodes the fundamental properties of the underlying asset. This paper introduced the Fungible Reserve Standard (FRS), a deterministic, transparent, and institutionally aligned framework for encoding carrying costs directly into token design.

The FRS addresses a gap in the existing literature and practice: while significant work has been done on token standards, DeFi composability, and RWA infrastructure, no prior framework has formally addressed how the structural negative carry of physical assets should be encoded in fungible token design. Existing approaches---issuer subsidization, balance rebasing, and wrapper tokens---each compromise on transparency, composability, or economic fidelity.

The FRS resolves these tradeoffs through a mechanism that is:
\begin{itemize}
    \item \textbf{Deterministic}: Carrying costs follow a predefined, algorithmically enforced schedule.
    \item \textbf{Transparent}: All parameters ($q(t)$, $r$, $S(t)$) are on-chain and publicly queryable.
    \item \textbf{Composable}: Holder balances remain numerically stable, preserving full ERC-20 compatibility with DeFi infrastructure.
    \item \textbf{Institutionally familiar}: The mechanism is mathematically aligned with traditional cost-accrual logic used in commodity management, ensuring that the token's economic representation evolves consistently with the real-world cost of maintaining the reserve.
    \item \textbf{Asset-agnostic}: Applicable to any real-world asset with positive, predictable holding costs.
\end{itemize}

The future of real-world asset tokenization depends not only on representing assets on-chain, but on ensuring that the economic truthfulness of those assets is embedded in protocol logic. By prioritizing native DeFi composability alongside accurate cost accounting, the Fungible Reserve Standard provides a robust foundation for this objective.


\end{document}